\theoremstyle{break}
\newtheorem{theorem}{Theorem}
\newtheorem{definition}[theorem]{Definition}
\newtheorem{proposition}[theorem]{Proposition}
\newtheorem{corollary}[theorem]{Corollary}
\newtheorem{lemma}[theorem]{Lemma}
\begin{document}

\title{A secure additive protocol for card players}

\author{Andr\'es Cord\'on--Franco, Hans van Ditmarsch, David Fern\'andez--Duque,\\ Joost J.\ Joosten, and Fernando Soler--Toscano\thanks{Emails and affiliations: {\tt \{acordon,hvd,dfduque,fsoler\}@us.es}, University of Sevilla, Spain; and {\tt jjoosten@ub.edu}, University of Barcelona, Spain. Hans van Ditmarsch is affiliated to IMSC (the Institute of Mathematical Sciences Chennai), India, as a research associate. He is corresponding author. This work extends the results reported in \cite{albertetal:2011}.}}
\date{}

\maketitle

\begin{abstract}
Consider three players Alice, Bob and Cath who hold $a$, $b$ and $c$ cards, respectively, from a deck of $d=a+b+c$ cards. The cards are all different and players only know their own cards. Suppose Alice and Bob wish to communicate their cards to each other without Cath learning whether Alice or Bob holds a specific card.

Considering the cards as consecutive natural numbers $0,1,...$, we investigate general conditions for when Alice or Bob can safely announce the sum of the cards they hold modulo an appropriately chosen integer. We demonstrate that this holds whenever $a,b>2$ and $c=1$. Because Cath holds a single card, this also implies that Alice and Bob will learn the card deal from the other player's announcement. 

\end{abstract}

\section{Introduction}

Consider players Alice and Bob each drawing three cards from a deck of seven cards numbered 0, 1, ..., 6, while Cath, a third player acting as the eavesdropper, gets the remaining card. One way for Alice and Bob to communicate their cards to each other by way of public announcements, without informing Cath of any of their cards, is when both announce the sum modulo 7 of their cards. For example, suppose Alice holds cards 0, 1, and 2, and Bob cards 3, 4, and 5 (so that Cath holds card 6). Alice would then announce ``3'' and Bob would announce ``5''. There are five different hands of cards compatible with Alice's announcement, namely 012, 046, 136, 145, and 235. From those, Cath, who holds 6, can eliminate 046 and 136. Cath cannot deduce that Alice has 0, because 145 is a possible hand of Alice. She also cannot deduce that Alice does not have 0, because Alice's actual hand 012 is also a possible hand. Bob's announcement that the sum of his cards modulo 7 is 5, does not provide additional information. But this is informative for Alice, namely in order to learn Bob's cards, Cath already knows Bob's sum after Alice's announcement, because she can derive that from: the sum of all cards, her own card, and the sum of Alice's cards. This solution consisting of two modulo-sum announcements is an instance of what we will later introduce as the $7$-$\mathsf{ModSum}$ protocol.

The modulo-sum solution for this problem was among the answers to a Moscow Mathematics Olympiad problem \cite{makarychevs:2001}. In that version, after Alice's modulo sum announcement, Bob announces Cath's card. This answer can easily be shown to have the same information content as a consecutive modulo sum announcement by Bob. The problem itself is much older and originates with Kirkman \cite{kirkman:1847}. There, the solution takes the form of a {\em design}, a collection of subsets of a given set that satisfies certain regulaties \cite{stinson:2004}. The five alternative hands of Alice solving the problem above are not a design. But, for the same card deal as above, Alice's announcement could also have been that she holds one of seven triples $012, 034, 056, 135, 146, 236, 245$. We note that each card occurs three times in this announcement and each pair of cards once. It is a design. Instead of 3, 3, and 1 cards, players Alice, Bob and Cath may hold $a,b,c$ cards. Some general patterns and special cases of card deal sizes $(a,b,c)$ for which Alice and Bob can communicate their hands of cards to each other are found in \cite{albertetal:2005}. A complete characterization is not known. The methods employed in that publication are not numerical (they do not involve modulo sum announcements). 

There is also a relation to the {\em bit exchange} problem: is it possible for Alice and Bob to share a secret bit by public communication? A secret bit is the value of a proposition commonly known by them, but not know by the `eavesdropper' Cath. For example, let $i$ be a card held by Alice, then after the execution of the above protocol Alice and Bob share the value of the secret bit `Alice holds card $i$'. The seminal publication for card exchange and bit exchange protocols is \cite{fischeretal:1996}, and additional contributions include \cite{mizukietal:2002}. A (single) bit exchange is easier to accomplish: it suffices to keep the ownership of one card a secret, not necessarily the ownership of all cards. E.g., if Alice were to announce `I have one of 012, 034, 056', then afterwards Alice and Bob share the value of the proposition `Alice holds 1'. But, of course, it is now known to all that Alice holds card 0.

The crucial difference between our approach and the ones cited above is that we begin by assigning an algebraic structure to the set $\{0,\dots,d-1\}$. This allows us to describe protocols in a compact fashion and draw on known combinatorial results, namely, a conjecture of Erd\"os and Heilbronn \cite{eh} giving lower bounds of the number of distinct sums of $n$-subsets of $\mathbb Z/(d)$ (later proven by Dias da Silva and Hamidourne \cite{dh}), and Bertrand's postulate \cite{bertrand}, as well as a theorem of Nagura \cite{nagura:1952}, which show that one can always find a prime number relatively close to a given integer. Below we will formulate these results in full detail.

\bigskip

Referring to the agents as `card players' is suggestive, because the scenarios we model are not meant to model any specific game, and the condition of Alice and Bob's full knowledge together with Cath's total ignorance would not be too appropriate, say, for poker. Rather, the interest of the Russian cards problem is that it gives {\em unconditionally safe} information exchange protocols, as opposed to {\em conditionally safe} cryptography which depends on the assumption that certain problems are of untractably high complexity, such as prime decomposition. In such protocols, Alice and Bob would be the principals involved in the exchange of information, and Cath the spy or eavesdropper (a.k.a.\ Eve) intercepting their communications. The authors believe that their work could lead to new methods of secure information exchange.

\bigskip

In this paper we give general conditions on $a,b,c$ that guarantee that Alice can safely announce the sum of her cards modulo the smallest prime greater than or equal to $d=a+b+c$ (that is, without Cath learning whether Alice or Bob holds any given card). In particular, they hold whenever $a,b>2$ and $c=1$, except for the cases $(3,4,1)$ and $(4,3,1)$. If $c=1$ this announcement is also informative, i.e., Alice and Bob will learn the card deal from the sum announcement of the other player. In \S\ref{sone} we consider the case where the sum $d$ of all cards is already prime, and in \S\ref{two} we consider composite $d$. The exceptional cases $(3,4,1)$ and $(4,3,1)$ are treated separately in \S\ref{341}. This is then followed by the main theorem which incorporates all of these partial results.

\section{Secure and informative protocols} \label{szero}

Consider a deck $D$ of $d$ cards. We will identify the cards with the natural numbers $0,1,\dots,d-1$. We say that a deal of $a$ cards for Alice, $b$ for Bob and $c$ for Cath, with $a,b,c\geq 1$ and $a+b+c=d$  has {\em card deal size} $(a,b,c)$. $A$, $B$ and $C$ represent each player's hand, i.e., the cards that they hold. We can represent the card deal by the collection $(A,B,C)$.

Consider the domain consisting of all card deals of a certain size $(a,b,c)$, i.e., of all $(A,B,C)$ with $A,B,C \subseteq D$, and $A,B,C$ pairwise disjoint. The property that players initially only know their own cards induces a partition on this domain. The card deals in the same equivalence class for a player are {\em indistinguishable} for that player. A {\em protocol} consists of a finite sequence of announcements by Alice and Bob. These announcements are supposed to be truthful. An announcement by, e.g., Alice corresponds to a collection of possible hands of cards $A$ including her actual hand, and the effect of her announcement is the restriction of the domain of card deals to all deals wherein she has one of those hands. Subsequent announcements induce successive such restrictions on the domain of card deals. Given a deal of cards and a sequence of announcements, a player $X$ {\em knows} that a player $Y$ holds a card $z$, if and only if $Y$ holds $z$ in all card deals indistinguishable for player $X$. A player knows the card deal if she knows the ownership of all cards.

E.g., before Alice's announcement 012 034 056 135 246 above, Bob, who holds $\{3,4,5\}$, does not know that Alice holds 0, because he considers hands $\{0,1,2\}$ and $\{1,2,6\}$ (among more) possible for Alice. But after the announcement, Bob knows that Alice holds 0. Before announcements, all players know the cards in their own hand.

In this work we restrict ourselves to protocols consisting of two announcements. An announcement by Alice is {\em informative} if, whenever Alice can make the announcement, Bob knows the card deal afterwards. An announcement by Alice is {\em secure} if, whenever Alice can make the announcement, Cath does not know any of Alice or Bob's cards afterwards. Similarly for informative and secure announcements by Bob.

\section{The modulo-sum announcement} \label{nogeens}

For $n\geq d$ and $X\subseteq D$, $\sum_n X$ denotes the sum of the elements of $X$ seen as elements of $\mathbb Z/(n)$. 

\begin{definition}[$n$-$\mathsf{ModSum}$] Let $n\geq d$ be an integer. We define the $n$-modulo-sum ( $n$-$\mathsf{ModSum}$) protocol to be that where Alice and Bob successively announce the sum of their respective cards modulo $n$.
\end{definition}
As they announce a sum, we call it an additive protocol.
\begin{proposition}[Information] \label{information}
The $n$-$\mathsf{ModSum}$ protocol is informative if $c=1$.
\end{proposition}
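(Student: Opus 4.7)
The plan is to exploit the fact that when $c=1$, Bob's residual uncertainty about Alice's hand is essentially the same as his uncertainty about Cath's single card, and that Alice's sum announcement pins Cath's card down modulo $n$.

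First I would fix notation. Let $S := \sum_n D$ denote the sum of all $d$ cards in $\mathbb Z/(n)$; this is a constant known to every player. Bob knows his own hand $B$ and hence knows $\sum_n B$. Since $c=1$, Cath's hand is a singleton $\{c_0\}$ for some $c_0\in D\setminus B$, and once $c_0$ is determined Alice's hand is forced to be $A = D\setminus(B\cup\{c_0\})$. Thus, from Bob's point of view before any announcement, there are exactly $a+1$ card deals he considers possible, one for each choice of $c_0\in D\setminus B$.

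Next I would do the key computation. Suppose Alice truthfully announces $\sigma := \sum_n A$. Since $A\cup B\cup\{c_0\} = D$ is a disjoint union, we have
\[
\sum_n A + \sum_n B + c_0 \equiv S \pmod n,
\]
so Bob can solve for $c_0 \equiv S - \sum_n B - \sigma \pmod n$.

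The final step is to observe that this determines $c_0$ uniquely as an element of $D$. Every card lies in $\{0,1,\dots,d-1\}$ and by hypothesis $n\geq d$, so distinct cards are distinct residues modulo $n$. Hence the value of $c_0 \bmod n$ identifies a unique element of $D$, which must be Cath's card, and then $A$ is determined. Therefore after Alice's announcement Bob knows the entire card deal. The argument for Bob's announcement informing Alice is symmetric (swap the roles of $A$ and $B$ and use that Alice already knows $\sum_n A$ and $S$).

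I do not anticipate a genuine obstacle here: the proposition is essentially the observation that, modulo $n\geq d$, a single unknown card is recoverable from the total sum and the two announced partial sums. The only subtlety worth flagging explicitly in the write-up is the use of $n\geq d$ to pass from an equation in $\mathbb Z/(n)$ to equality of actual cards, since this is what prevents any ambiguity in identifying $c_0$.
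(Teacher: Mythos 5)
Your proof is correct and follows essentially the same route as the paper: both use the identity $\sum_n A+\sum_n B+\sum_n C=\sum_n D$ to let each player solve for Cath's single card after the other's announcement. Your explicit remark that $n\geq d$ guarantees the residue identifies a unique card is a detail the paper leaves implicit, but the argument is the same.
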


\begin{proof}
We have that
\[\sum_nA+\sum_nB+\sum_nC=\sum_nD.\]
Once Alice announces $\sum_nA$, Bob can compute the term $\sum_nC$ from the three other terms in this equation. In case that $C$ is a singleton, this immediately gives him the value of Cath's card. Similarly, $\sum_nB$ allows Alice to figure out Cath's card as well, giving both Alice and Bob knowledge of the entire deal.
\end{proof}

The order in which the announcements are made is unimportant, since both agents announce information that was true at the beginning and that remains true after the announcement by the other agent. Although obvious in this case, it is a remarkably characteristic of a protocol: the order of announcements in a protocol can typically not be altered. E.g., in the protocol where Bob's response to Alice's sum announcement is to announce Cath's card, the order of announcements cannot be swapped.

From Cath's perspective, it does not matter whether she learns $\sum_nA$ or $\sum_nB$, since she can use one to compute the other. In that sense she does not learn anything new in the second announcement. Suppose Alice is first, then Cath already knows what Bob will announce after Alice, before he is doing so. She cannot eliminate card deals given Bob's announcement. But in another sense she still learns something from Bob's announcement: after that, Cath knows that Alice knows the card deal.

If $c>1$, the announcement is typically not informative. For example, if Alice, Bob and Cath each hold two cards respectively, namely $\{0,1\}$, $\{2,5\}$, and $\{3,4\}$, then Alice announcing that $\sum_6 A = 1$ does not result in Bob learning Alice's cards, as they may also be $\{3,4\}$.

Although the $n$-$\mathsf{ModSum}$ protocol is informative for $c=1$, it is not always secure. For a trivial example, if Alice, Bob and Cath each hold one card, then Cath learns the entire deal from Alice's announcement of her card.

The following proposition is a special case of a similar proposition in \cite{albertetal:2005}.

\begin{proposition}[Reduction] \label{reduction} For $n\geq d$, if the announcement of $\sum_n A$ is secure for deals of size $(a,b,c)$, then the announcement of $\sum_n A$ is also secure for deals of size $(a,b,c')$ with $c' \leq c$.
\end{proposition}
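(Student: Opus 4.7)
The plan is to prove the contrapositive: if the announcement of $\sum_n A$ is not secure for deals of size $(a,b,c')$, then it is not secure for deals of size $(a,b,c)$ either. The underlying idea is monotonicity of information — enlarging Cath's hand can only give her extra data, so any card whose owner she can already pin down in the smaller scenario she will continue to pin down once she holds more cards. The only delicate point is that the two scenarios involve decks of different sizes, $D' = \{0,\ldots,a+b+c'-1\}$ and $D = \{0,\ldots,a+b+c-1\}$, so the reduction has to be carried out on disjoint bookkeeping.

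Concretely, suppose there is a deal $(A,B,C')$ over $D'$ and a card $x\notin C'$ such that after Alice announces $\sum_n A = s$, Cath knows that $x\in A$ (the case $x\in B$ is symmetric); equivalently, every deal $(A^\ast,B^\ast,C')$ with $\sum_n A^\ast = s$ satisfies $x\in A^\ast$. I would then move to the larger deck $D$ by padding Cath's hand with the $c-c'$ extra cards $E = \{a+b+c',\ldots,a+b+c-1\}$, setting $C := C'\cup E$. Since $A\cup B\subseteq D'$ is disjoint from $E\subseteq D\setminus D'$, the triple $(A,B,C)$ is a legal deal of size $(a,b,c)$ over $D$, with the same announcement value $\sum_n A = s$.

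Applying the assumed security for $(a,b,c)$ in this padded scenario, Cath (now holding $C$) cannot determine whether $x\in A$, so some candidate deal $(A^{\ast\ast},B^{\ast\ast},C)$ must exist with $\sum_n A^{\ast\ast} = s$ and $x\in B^{\ast\ast}$. But $A^{\ast\ast}\cup B^{\ast\ast} = D\setminus C = D'\setminus C'$, so $(A^{\ast\ast},B^{\ast\ast},C')$ is in fact a deal of size $(a,b,c')$ over $D'$ that is consistent with Cath's original information and places $x$ with Bob rather than Alice. This contradicts the assumption that Cath knew $x\in A$ in the smaller scenario, completing the proof.

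I do not expect any genuine obstacle: the argument is purely structural and amounts to observing that the set of deals compatible with Cath's view shrinks monotonically as her hand grows. The only points to keep straight are (i) taking the padding $E$ from $D\setminus D'$ rather than from $D'$, so that $A$ and $B$ are untouched, and (ii) checking that the announced sum $s\in\mathbb{Z}/(n)$ is the same in both scenarios, which is automatic since $A$ is unchanged.
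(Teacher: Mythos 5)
Your proposal is correct and follows essentially the same route as the paper: argue the contrapositive by padding Cath's hand with the $c-c'$ fresh cards $a+b+c',\dots,a+b+c-1$, so that the deals compatible with Cath's information (same announced sum, same complement $D\setminus C = D'\setminus C'$) coincide in the two scenarios and insecurity transfers upward. You merely spell out the details that the paper's one-sentence proof leaves implicit.
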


\begin{proof} If for some card deal of size $(a,b,c')$, after the announcement of $\sum_n A$, Cath knows some of Alice or Bob's cards, then this would be also the case for the card deal of size ($a,b,c)$ in which Cath additionally holds $c -c'$ (new) cards
\[(a+b+c'), (a+b+c'+1), \dots ,(a+b+c-1)\] and where it is publicly known that she holds these additional cards.
\end{proof}

\section{The number-of-cards modulo-sum announcement} \label{sone}

The case of the $n$-$\mathsf{ModSum}$ protocol where $n = d$, i.e., the total number of cards, we give a special name.

\begin{definition}[$\mathsf{DModSum}$] The number-of-cards modulo-sum ($\mathsf{DModSum}$) protocol is that where Alice and Bob announce the sum of their respective cards modulo $d$.
\end{definition}

With the following lemmas, we will work our way towards the main result of this section that for card deals of size $(a,b,1)$ with $a+b+1$ prime, announcing $\sum_dA$ is secure for $a,b>2$ (Corollary \ref{corC1}). (Given these results, they dually hold for $\sum_dB$ as well.)

\begin{lemma} \label{lem1}
Let $t,d$ be natural numbers with $0<t<d-2$ and $x\in \mathbb Z/(d)$. Then, there exists $T\subseteq \mathbb Z/(d)$ with $|T|=t$ and $\sum T=x$.
\end{lemma}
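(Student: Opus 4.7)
The plan is to prove surjectivity of the map $T\mapsto \sum T$ from the size-$t$ subsets of $\mathbb Z/(d)$ to $\mathbb Z/(d)$, by exhibiting a starting witness and then showing that the image of this map is closed under adding $1$. The starting witness is simply $T_0=\{0,1,\dots,t-1\}$, which has exactly $t$ elements and realises the sum $t(t-1)/2 \bmod d$, so the image is nonempty.

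The key step is a one-element swap. Given any size-$t$ subset $T\subseteq \mathbb Z/(d)$ with $T\neq \mathbb Z/(d)$, I claim there exists some $i\in T$ with $(i+1)\bmod d\notin T$; for otherwise $T$ would be closed under the successor map of the cyclic group $\mathbb Z/(d)$ and would therefore coincide with the whole group, contradicting $|T|=t\le d-3<d$. For such an $i$, the set $T':=(T\setminus\{i\})\cup\{(i+1)\bmod d\}$ again has $t$ elements and satisfies $\sum T'\equiv \sum T + 1 \pmod d$.

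Iterating the swap from $T_0$, the set of sums realised by size-$t$ subsets is nonempty and closed under $+1$ in $\mathbb Z/(d)$; since $1$ generates $\mathbb Z/(d)$ additively, every residue is realised, and in particular the prescribed $x$. There is no real obstacle in this argument: the only point that needs verification is the existence of the swap position, and the hypothesis $t<d-2$ is comfortably stronger than what this verification actually demands (any $t<d$ would suffice).
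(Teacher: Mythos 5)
Your proof is correct, but it is genuinely different from the one in the paper. The paper gives a one-shot explicit construction: it pairs up elements of $\mathbb Z/(d)$ into sets $\{+c,-c\}$ avoiding $0$, $\pm x$ (and $d/2$ when $d$ is even), each pair summing to zero; for odd $t$ it takes $\{x\}$ together with $\frac{t-1}{2}$ such pairs, and for even $t$ it additionally throws in $0$. The hypothesis $t<d-2$ is exactly what guarantees enough pairs are available, so in the paper's argument the bound is doing real work. You instead prove surjectivity of the map $T\mapsto\sum T$ dynamically: starting from $\{0,1,\dots,t-1\}$ and showing the set of realised sums is closed under $+1$ via a one-element swap $i\rightsquigarrow i+1$, whose existence for a nonempty proper subset $T$ follows because otherwise $T$ would be successor-closed and hence all of $\mathbb Z/(d)$ (here you quietly use $t>0$, which the hypothesis supplies). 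Your route avoids the parity case split and the bookkeeping about $0$, $\pm x$, $d/2$, and it establishes the stronger statement that the conclusion holds for every $0<t<d$, making explicit that the lemma's bound $t<d-2$ is not needed for the lemma itself but only reflects how it is used later; the paper's construction, in exchange, exhibits the witness set $T$ directly rather than as the endpoint of an iteration.
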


\begin{proof}
There are at least $\left\lceil\frac{d-4}{2}\right\rceil$ pairs of the form $\{+ c,-c\}$ with $c\not=-c$ and $c\not=\pm x$. The reason for this is that the only elements that do not belong to one such pair are $0$, $\pm x$ and, if $d$ is even, $d/2$.

Let $+ c_1,-c_1, \dots$ be an enumeration of all these pairs. Then, if $t$ is odd we can take
\[T_t=\{x\}\cup\left\{+ c_i,-c_i:i<\frac{t-1}2\right\};\]
note that $\frac{t-1}2\leq\left\lceil\frac{d-4}{2}\right\rceil$ because $t<d-2$.

Otherwise, take $T_t=T_{t-1}\cup\{0\}$.
\end{proof}

\begin{proposition} \label{teoAB} Announcing $\sum_dA$ (or, equivalently, $\sum_dB$) is secure for all deals of size $(a,b,c)$ if and only if, for any value of $0\leq x < d$ and any set $S\subset D$ of $a+b-1$ cards, there are $A,B\subset S$ such that $|A|=a$, $|B|=b$ and $\sum_d A = \sum_d B = x$.
\end{proposition}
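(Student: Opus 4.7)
I would prove the two directions of the equivalence separately, beginning with the sufficient direction since it is the one the later theorems of the paper invoke.

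For $(\Leftarrow)$, assume the combinatorial condition and fix any deal $(A_0,B_0,C_0)$ with Alice announcing $x=\sum_dA_0$. To verify security for a given card $z\in A_0\cup B_0$, set $S=(A_0\cup B_0)\setminus\{z\}$, which has size $a+b-1$, and invoke the hypothesis twice. At the pair $(S,x)$ it supplies $A'\subseteq S$ of size $a$ with $\sum_dA'=x$; the alternative deal $(A',(A_0\cup B_0)\setminus A',C_0)$ is then compatible with the announcement and with Cath's view, and places $z$ in Bob's hand. At the pair $(S,\sum_d(A_0\cup B_0)-x)$ it supplies $B''\subseteq S$ of size $b$ summing to $\sum_dB_0$; the corresponding alternative deal $((A_0\cup B_0)\setminus B'',B'',C_0)$ places $z$ in Alice's hand. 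Hence Cath cannot decide the owner of $z$, and security follows.

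For $(\Rightarrow)$ I would argue contrapositively. Suppose the condition fails at some pair $(S,x)$; by the symmetry between the $\sum_dA$ and $\sum_dB$ announcements asserted in the proposition, I may assume that no $a$-subset of $S$ sums to $x$. Pick $z\in D\setminus S$ together with $A'\subseteq S$ of size $a-1$ satisfying $\sum_dA'=x-z$, so that $A_0=\{z\}\cup A'$ is a legal Alice-hand with $\sum_dA_0=x$. In the deal $(A_0,S\setminus A',D\setminus(S\cup\{z\}))$ Alice announces $x$, and the failure hypothesis forces every $a$-subset of $S\cup\{z\}$ summing to $x$ to contain $z$. Hence Cath deduces that Alice owns $z$, violating security.

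The principal obstacle is justifying the required choice of $z$: if no $z\in D\setminus S$ satisfies $x-z=\sum_dA'$ for any $(a-1)$-subset $A'\subseteq S$, then the set of shifts $\{x-\sum_dA'\}$ is trapped inside $S$, which forces the number of distinct $(a-1)$-subset sums of $S$ to be at most $a+b-1$. I would handle this degenerate case either by running the analogous construction on the dual failure (no $b$-subset of $S$ summing to $x$) via the complement in $S$, or, in the prime-modulus setting that governs the paper's later applications, by invoking the Dias da Silva--Hamidoune lower bound on the number of distinct $(a-1)$-subset sums, which precludes such a tight sumset.
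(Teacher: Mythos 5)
Your ``if'' direction is sound and is essentially the paper's own argument: you take $S=(A_0\cup B_0)\setminus\{z\}$ (the paper writes $S=D\setminus(C\cup\{z\})$, the same set) and apply the hypothesis once at $x=\sum_dA_0$ for an $a$-subset and once at $\sum_dB_0$ for a $b$-subset, obtaining for each $z$ two deals compatible with Cath's information that place $z$ in different hands. No complaints there.

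The ``only if'' direction, however, has a genuine gap, which you flag but do not close. Your construction enlarges $S$ by a \emph{single} card $z$ with $x-z=\sum_dA'$ for some $(a-1)$-subset $A'\subseteq S$, and such a $z$ need not exist: since $x-\sum_dA'\in S\setminus A'$ would already give an $a$-subset of $S$ with sum $x$, the degenerate case is exactly that $x-\sum_dA'\in A'$ for \emph{every} $(a-1)$-subset $A'$, and this really happens. For instance take $d=9$, $a=3$, $b=1$, $S=\{0,3,6\}$, $x=3$: no $3$-subset of $S$ sums to $3$, yet $x-\sum_dA'\in A'$ for every pair $A'\subseteq S$, so no single added card yields a witnessing $3$-subset (at least two new cards are needed). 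Your two proposed remedies do not cover the statement as written: the Dias da Silva--Hamidoune bound requires $d$ prime (and gives $(a-1)b+1>a+b-1$ only when $a\geq 3$ and $b\geq 2$), whereas the proposition is asserted for arbitrary $d=a+b+c$; and the ``dual failure'' patch is unavailable, because the condition may fail only on the $a$-side (a $b$-subset of $S$ with sum $x$ may well exist), and since $|S|=a+b-1$ the complement in $S$ of an $a$-subset is a $(b-1)$-subset, so there is no complementation duality to exploit. The paper's proof avoids the whole issue: using Lemma \ref{lem1} it takes a \emph{minimal} $T\subseteq D\setminus S$ such that $S\cup T$ contains an $a$-subset with sum $x$; minimality forces every such $a$-subset to contain all of $T$, and in a deal where Alice holds one of them, Bob holds $b$ cards of $(S\cup T)\setminus A$ and Cath the rest, Cath concludes that Alice holds every card of $T$. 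Your single-card argument is precisely the special case $|T|=1$; replacing $z$ by this minimal $T$ (and noting, as you and the paper both do, that a failure on the $b$-side lets Cath deduce one of Bob's cards, since she can compute $\sum_dB$ from $\sum_dA$) repairs the direction.
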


\begin{proof}
First let us prove the `if' direction, and assume the right-hand condition holds.

Suppose that Alice announces $\sum_dA$ and $z$ is any card that Cath does not hold. We must show that Cath cannot determine whether Alice or Bob holds $z$.

Consider the set $S=D\setminus(C\cup \{z\})$. This set has $a+b-1$ elements.

Now, pick $A'\subset S$ such that $\sum_dA'=\sum_dA$ and $|A'|=a$.

In the deal where Alice holds $A'$, she does not hold $z$ yet the sum of her cards is still $\sum_dA$, so Cath cannot distinguish between this scenario and the current one. Therefore Cath cannot correctly conclude that Alice holds $z$.

Similarly, we can pick $B'\subset S$ with $|B'|=b$ and and $\sum_dB'=\sum_dB$. Again Cath cannot distinguish between this deal and the original deal (since the sum of Alice's cards does not change either), and Bob does not hold $z$ in this new deal. Hence, Cath cannot correctly conclude that Bob holds $z$.

For the other direction, suppose that there is $0\leq x < d$ and a particular set $S$ of $a+b-1$ cards such that there does not exist $A\subset S$ with $|A|=a$ and $\sum_dA = x$. We must show that announcing $\sum_dA$ is not always secure. (The situation is symmetric if the condition fails for $b$ instead of $a$, so we only treat this case.) 

For such an $S$ there exists a minimal subset $T\subset D\setminus S$ such that there is $A\subset S\cup T$ with $|A|=a$ and $\sum_d A = x$ (Lemma \ref{lem1} guarantees that at least one such set exists, and therefore there is a minimal one). Observe that $|S\cup T|\geq a+b$ as $T$ is not empty and necessarily $T\subseteq A$, because of the minimality of $T$. Then, consider any deal $(A,B,C)$ where $B$ is an arbitrary subset of $S\cup T\setminus A$ and $C$ is the rest of the cards. In that deal, if Alice announces that $\sum_d A=x$ then Cath knows that all the cards in $T$ belong to $A$ because all the cards that Cath does not hold are in $S\cup T$ and the only way of forming a set of $a$ cards with sum $x$ is by using all the cards in $T$ (this by minimality).

Therefore the announcement of $\sum_dA$ is not secure, since Cath would learn at least one of Alice's cards. If the corresponding condition for $b$ fails, then by the same argument Cath would learn one of Bob's cards. In either case we conclude that the announcement of $\sum_dA$ is not secure.
\end{proof}

To obtain our main result, Theorem \ref{corPrime}, we now combine this proposition with a combinatorial theorem, conjectured by Erd\"os and Heilbronn in \cite{eh} and proven by Dias da Silva and Hamidoune in \cite{dh}: 

\begin{proposition}[\cite{dh}]\label{one} Let $d$ be a prime. For a set $A\subseteq\mathbb Z/(d)$, denote by $S^n(A)$ the set of all sums $x_1+...+x_n$ of $n$ distinct elements of $A$. Then, \[|S^n(A)|\geq\min \{ d,n|A|-n^2+1 \} .\] \end{proposition}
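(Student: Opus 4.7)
Since this proposition is quoted from \cite{dh}, I would attempt its reproof via the polynomial method of Alon, Nathanson and Ruzsa. Suppose for contradiction that $m := |S^n(A)| < \min\{d,\, n|A|-n^2+1\}$, and let $E \supseteq S^n(A)$ be a set of exactly $m$ elements of $\mathbb{Z}/(d)$. Define
\[
P(x_1,\dots,x_n)\;=\;\prod_{1\leq i<j\leq n}(x_j - x_i)\cdot\prod_{e\in E}\bigl((x_1+\cdots+x_n) - e\bigr),
\]
a polynomial of total degree $\binom{n}{2}+m$ over the field $\mathbb{F}_d$. Then $P$ vanishes identically on $A^n$: if the $a_i$ are pairwise distinct, then $a_1+\cdots+a_n \in S^n(A) \subseteq E$ and the second factor kills $P$; if two of the $a_i$ coincide, the Vandermonde factor kills $P$.

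The idea is to contradict this vanishing via Alon's Combinatorial Nullstellensatz: it suffices to exhibit a monomial $x_1^{t_1}\cdots x_n^{t_n}$ of total degree exactly $\binom{n}{2}+m$, with each $t_i \leq |A|-1$, whose coefficient in $P$ is nonzero in $\mathbb{F}_d$. The arithmetic assumption $m \leq n|A|-n^2$ leaves ample room to distribute the exponents. Because only the top-degree piece $(x_1+\cdots+x_n)^m\prod_{i<j}(x_j-x_i)$ contributes to a monomial of maximal degree, expanding the Vandermonde as an alternating sum over $\sigma\in S_n$ rewrites the desired coefficient as
\[
\sum_{\sigma \in S_n}\operatorname{sgn}(\sigma)\,\binom{m}{t_1-\sigma(1)+1,\;\dots,\;t_n-\sigma(n)+1},
\]
with the convention that the multinomial vanishes when any entry is negative.

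The hard part, as usual in this style of argument, is to pick the $t_i$ so that this signed sum of multinomials is provably nonzero in $\mathbb{F}_d$. A near-symmetric choice---distributing the excess $m$ as evenly as possible across the shifts $t_i-(i-1)$---reduces the alternating sum, by Lindstr\"om--Gessel--Viennot or Vandermonde-determinant manipulations, to a quotient of products of positive integers strictly smaller than $d$; since $d$ is prime, none of those factors vanishes modulo $d$, and the coefficient is nonzero. Carrying out this final coefficient computation cleanly, and invoking primality to invert the denominators, is where I would expect the delicate combinatorial work to concentrate.
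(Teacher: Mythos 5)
You should first note that the paper itself offers no proof of Proposition~\ref{one}: it is imported verbatim from \cite{dh} as a known result (the Erd\"os--Heilbronn conjecture), so there is no in-paper argument to compare against. Your route is the Alon--Nathanson--Ruzsa polynomial-method proof, which is a genuinely different argument from the exterior-algebra proof of Dias da Silva and Hamidoune and is a legitimate way to re-derive the statement; your setup is sound (the polynomial $P$ does vanish on $A^n$, the degree bookkeeping is correct, and the expansion of the relevant coefficient as a signed sum of multinomial coefficients is right). But as written the proposal is a plan, not a proof: the only nontrivial content of this method is the nonvanishing of that coefficient, and you explicitly defer it (``where I would expect the delicate combinatorial work to concentrate''). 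That is a genuine gap, not a routine verification.

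Two points are needed to close it. First, the top-degree part $(x_1+\cdots+x_n)^m\prod_{i<j}(x_j-x_i)$ is antisymmetric, so any monomial with a repeated exponent has coefficient zero; you must therefore choose $t_1<t_2<\cdots<t_n\le |A|-1$ with $\sum_i t_i=m+\binom{n}{2}$, and such pairwise distinct exponents exist precisely when $m+\binom{n}{2}\le\sum_{i=1}^n(|A|-i)$, i.e.\ $m\le n|A|-n^2$. So the room is exact rather than ``ample''---this constraint is exactly where the $n^2$ in the statement comes from, and an ``even'' distribution is admissible only because it happens to keep the shifted exponents distinct. Second, the signed sum must actually be evaluated; the required identity is the classical determinant evaluation
\[
\sum_{\sigma\in S_n}\operatorname{sgn}(\sigma)\binom{m}{t_1-\sigma(1)+1,\;\dots,\;t_n-\sigma(n)+1}
=\pm\,\frac{m!}{t_1!\cdots t_n!}\prod_{i<j}(t_j-t_i),
\]
which needs its own short proof (e.g.\ column reduction of $\det\bigl(1/(t_i-j+1)!\bigr)$), not just an appeal to ``Lindstr\"om--Gessel--Viennot or Vandermonde manipulations''. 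Once it is in hand, you should also say explicitly why each factor is a unit modulo the prime $d$: $m\le d-1$ (this uses $m<\min\{d,\,n|A|-n^2+1\}\le d$, so the hypothesis $m<d$ is doing real work), $t_i\le |A|-1<d$, and $0<t_j-t_i<d$; then the Combinatorial Nullstellensatz yields a point of $A^n$ where $P\neq 0$, the desired contradiction. With these two steps supplied, your argument becomes a complete and correct alternative proof of the cited result.
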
 

\begin{theorem}[Security] \label{corPrime} If $a+b+c$ is prime, then announcing $\sum_d A$ (or $\sum_d B$) is secure for all deals of size $(a,b,c)$ if and only if \[ \begin{array}{rcl}
    ab - 2a - b - c + 1 & \geq & 0 \ \text{and}\\
    ab - 2b - a - c + 1 & \geq & 0
  \end{array} \] 
\end{theorem}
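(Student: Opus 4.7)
The plan is to combine Proposition~\ref{teoAB} with the Erd\H{o}s--Heilbronn-type bound given by Proposition~\ref{one}. By Proposition~\ref{teoAB}, announcing $\sum_d A$ is secure exactly when, for every $x\in\mathbb{Z}/(d)$ and every $S\subseteq D$ of size $a+b-1$, the set $S$ contains both an $a$-subset and a $b$-subset summing to $x$ modulo $d$. Writing $S^n(S)$ for the set of sums of $n$-element subsets of $S$, this is equivalent to demanding that $S^a(S)=S^b(S)=\mathbb{Z}/(d)$ for every such $S$.

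For the ``if'' direction I would apply Proposition~\ref{one} (this is where we use primality of $d$) with $|S|=a+b-1$ and $n=a$ to obtain
\[ |S^a(S)|\ \geq\ \min\{\,d,\ a(a+b-1)-a^2+1\,\}\ =\ \min\{\,d,\ ab-a+1\,\}. \]
The first hypothesis $ab-2a-b-c+1\geq 0$ rearranges exactly to $ab-a+1\geq a+b+c=d$, so the minimum is $d$ and hence $S^a(S)=\mathbb{Z}/(d)$. The symmetric computation with $n=b$ uses the second hypothesis $ab-2b-a-c+1\geq 0$ to give $S^b(S)=\mathbb{Z}/(d)$. Proposition~\ref{teoAB} then yields security.

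For the ``only if'' direction, assume without loss of generality that $ab-2a-b-c+1<0$, i.e.\ $ab-a+1<d$. I would exhibit a witness against the condition of Proposition~\ref{teoAB} by taking $S=\{0,1,\dots,a+b-2\}\subseteq D$ (which fits since $c\geq 1$ gives $a+b-2\leq d-1$). The $a$-subsets of this arithmetic progression have sums forming a block of consecutive integers running from $\tfrac{a(a-1)}{2}$ to $\tfrac{a(a-1)}{2}+(ab-a)$, giving exactly $ab-a+1$ distinct integer values. Because $ab-a+1<d$, these values remain pairwise distinct modulo $d$ and cannot cover all of $\mathbb{Z}/(d)$, so some $x\in\mathbb{Z}/(d)$ is missed. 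Proposition~\ref{teoAB} then certifies that the announcement is not secure. If instead the second inequality fails, the argument is symmetric.

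The main obstacle is not the overall architecture (which is dictated by Propositions~\ref{teoAB} and~\ref{one}) but the bookkeeping: one must recognise that the two inequalities in the statement are precisely the rearrangements of $ab-a+1\geq d$ and $ab-b+1\geq d$, and then verify that the arithmetic progression $\{0,1,\dots,a+b-2\}$ actually realises the Erd\H{o}s--Heilbronn bound as a set of residues modulo $d$ (which is why one needs the strict inequality $ab-a+1<d$ rather than merely tightness over $\mathbb{Z}$).
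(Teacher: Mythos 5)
Your proposal is correct and follows essentially the same route as the paper: the ``if'' direction combines Proposition~\ref{teoAB} with the Dias da Silva--Hamidoune bound of Proposition~\ref{one} via the rearrangement $ab-a+1\geq d$, and the ``only if'' direction exhibits tightness on an interval of consecutive cards (you use $\{0,\dots,a+b-2\}$ where the paper uses $\{1,\dots,a+b-1\}$, and you spell out the consecutive-sums count $ab-a+1$ that the paper only asserts). The only remark worth making is that your appeal to the sums remaining distinct modulo $d$ is unnecessary: having only $ab-a+1<d$ integer values already caps $|S^a(S)|$ below $d$, so coverage of $\mathbb{Z}/(d)$ fails regardless.
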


\begin{proof}
By Proposition \ref{teoAB} we need that any set $X$ of $a+b-1$ cards has subsets of size $a$ and $b$ which sum up to all the possible values from $0$ to $d-1$. This condition holds for $a$ whenever
\[ \begin{array}{c}
    |S^a (X)| = d,
  \end{array} \] 
which in turn holds whenever
\[a|X|-a^2+1\geq d,\]
that is, when
\[a(a+b-1)-a^2+1\geq a+b+c.\]
Some simplification takes this inequality to the form displayed above.

Note that the bound is tight because the inequality in Proposition \ref{one} is also tight; we have that
\[|S^a(\{1,2,\dots,a+b-1\})|=a(a+b-1)-a^2+1,\]
so if the latter is less than $d$ we have that the announcement of $\sum_dA$ is not always secure (by the `only if' direction of Proposition \ref{teoAB}).

A similar analysis shows that the second inequality is equivalent to Cath always being ignorant of Bob's cards.
\end{proof}

\begin{corollary} \label{corC1} In deals of size $(a,b,1)$ with $a+b+1$ prime, announcing $\sum_dA$ is secure if and only if $a,b>2$.
\end{corollary}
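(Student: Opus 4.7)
The plan is to obtain the corollary as a direct specialization of Theorem \ref{corPrime} to the case $c=1$. With $c=1$, the two security inequalities of the theorem become
\[ab - 2a - b \geq 0 \quad \text{and} \quad ab - 2b - a \geq 0,\]
so the task reduces to showing that this pair of inequalities holds simultaneously if and only if $a,b>2$.

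For the \emph{if} direction I would assume $a,b\geq 3$ and factor the first inequality as $a(b-2)\geq b$. Since $b-2\geq 1$, I can bound $a(b-2)\geq 3(b-2)=3b-6$, which is at least $b$ precisely when $b\geq 3$. Symmetry (swap $a$ and $b$) handles the second inequality, and Theorem \ref{corPrime} then delivers security.

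For the \emph{only if} direction I would argue by cases on which of $a,b$ is at most $2$; by symmetry of the two inequalities under $a\leftrightarrow b$ it suffices to consider $a\leq 2$. If $a=1$, direct substitution gives $ab-2a-b=-2<0$, so the first inequality fails. If $a=2$, direct substitution gives $ab-2b-a=-2<0$, so the second inequality fails. In either case Theorem \ref{corPrime} tells us that the $\mathsf{DModSum}$ announcement is not secure.

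There is essentially no obstacle here: the only ``content'' is checking that the bound $a(b-2)\geq b$ holds uniformly for $a,b\geq 3$ and that the small cases $a\in\{1,2\}$ each kill at least one of the two inequalities. The work has already been done in Theorem \ref{corPrime} and Proposition \ref{teoAB}; the corollary is a clean numerical specialization.
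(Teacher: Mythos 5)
Your proposal is correct and follows the paper's proof exactly: the corollary is obtained by specializing Theorem \ref{corPrime} to $c=1$ and checking that the two inequalities hold precisely when $a,b>2$. You merely spell out the elementary arithmetic (the bound $a(b-2)\geq b$ for $a,b\geq 3$ and the failures at $a\in\{1,2\}$) that the paper leaves implicit in its one-line verification.
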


\begin{proof}
When $c=1$, the requisites in Theorem \ref{corPrime} are satisfied iff $a,b>2$.
\end{proof}
This gives us a full characterization for when the $\mathsf{DModSum}$ protocol works when $c=1$ and $d$ is prime. In the following section we will show how one can deal with non-prime values of $d$.

\section{The least-prime modulo-sum announcement} \label{two}

The least-prime modulo-sum protocol is the $n$-$\mathsf{ModSum}$ protocol for $n$ the least prime greater than or equal to $d$. In this section we show that for $c=1$ and $a,b \geq 3$ except for $a=4$ and $c=3$, and $a=4$ and $c=3$, the least-prime modulo-sum protocol is secure. Consider a card deal with size parameters $(a,b,c)$. 

\begin{definition}[$\mathsf{LPModSum}$] 
The $\mathsf{LPModSum}$ (`least-prime modulo-sum') protocol is as follows: Alice and Bob announce the sum of their cards modulo $p$, where $p$ is the least prime greater than or equal to $a+b+c$.
\end{definition}









We are now ready to obtain our result for the $\mathsf{LPModSum}$ protocol for $(a,b,1)$. The idea is to find some appropriate $c \geq 1$ such that $a+b+c$ is prime and $(a,b,c)$ satisfies the conditions of the Security Theorem \ref{corPrime}, then apply the Reduction Proposition \ref{reduction}. First we need the following properties on the distribution of primes:

\begin{proposition}[Bertrand's postulate, \cite{bertrand}] For each $n > 3$, there is always a prime $p$ such that $n < p < 2n -2$.
\end{proposition}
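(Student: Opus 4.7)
The plan is to give Erd\"os's classical elementary proof of Bertrand's postulate via the central binomial coefficient $\binom{2n}{n}$, and then to upgrade the usual conclusion ``$n<p\leq 2n$'' to the strict form ``$n<p<2n-2$'' by inspecting a few small cases.

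First I would establish the lower bound $\binom{2n}{n}\geq 4^n/(2n+1)$, which is immediate because this coefficient is the largest of the $2n+1$ terms summing to $(1+1)^{2n}=4^n$. Next, using the Legendre formula, for each prime $p$ one has
\[
v_p\!\binom{2n}{n}=\sum_{k\geq 1}\Bigl(\lfloor 2n/p^k\rfloor-2\lfloor n/p^k\rfloor\Bigr),
\]
where every summand lies in $\{0,1\}$ and all of them vanish once $p^k>2n$, so $p^{v_p}\leq 2n$; in particular primes $p>\sqrt{2n}$ contribute at most a single factor. A short counting argument also shows that primes $p$ with $2n/3<p\leq n$ do not divide $\binom{2n}{n}$ for $n\geq 3$. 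Combining these facts with the Chebyshev-type bound $\prod_{p\leq N}p\leq 4^{N}$ (proved by induction on $N$ via $\binom{2m+1}{m}\leq 4^m$), the assumption that no prime lies in $(n,2n]$ forces
\[
\frac{4^n}{2n+1}\;\leq\;\binom{2n}{n}\;\leq\;(2n)^{\sqrt{2n}}\cdot 4^{2n/3},
\]
an inequality that fails for all $n$ past an explicit (and not too large) threshold.

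Small values of $n$ below the threshold are handled by the standard trick of exhibiting a chain of primes, each less than twice its predecessor, that together cover every remaining interval $(n,2n]$. To upgrade to the strict bound $n<p<2n-2$, note that $2n$ is never prime for $n>1$ and $2n-2$ is even and larger than $2$ for $n>2$, so the strengthening could fail only if the \emph{only} prime in $(n,2n]$ were $2n-1$. For large $n$ the Erd\"os estimate in fact produces many primes in that interval, ruling this out; for the finitely many $n>3$ not covered, a direct check settles the matter. The main obstacle here is bookkeeping rather than mathematics: tracking the threshold precisely and verifying the residual small cases. Alternatively, one could invoke Nagura's theorem (cited later in the excerpt), which guarantees a prime in $(n,6n/5)$ for $n\geq 25$ and therefore immediately implies the strengthened statement for such $n$, leaving only a handful of small values to verify by hand.
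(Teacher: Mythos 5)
The paper does not prove this statement at all: it is quoted as a classical result (Bertrand's postulate, in the sharper form originally conjectured by Bertrand and proved by Chebyshev) and simply cited, exactly like Nagura's theorem two lines below, so there is no in-paper argument to compare yours against. Your sketch is the standard Erd\H{o}s argument and is sound in outline: the bounds $4^n/(2n+1)\leq\binom{2n}{n}$, $p^{v_p}\leq 2n$, the vanishing of $v_p$ for $2n/3<p\leq n$, and $\prod_{p\leq N}p\leq 4^N$ are all correct ingredients, and your observation that the strict form $n<p<2n-2$ can only fail when the sole prime in $(n,2n]$ is $2n-1$ (since $2n$ and $2n-2$ are even and exceed $2$) is the right reduction. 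What remains is genuinely just bookkeeping, but note it is nontrivial bookkeeping: to ``produce many primes'' you must keep the product $\prod_{n<p\leq 2n}p$ explicitly in the inequality (bounding it by $2n-1$ under the failure hypothesis) and extract a concrete threshold, below which the prime-chain check covers on the order of several hundred values of $n$, not merely a handful. Your alternative route via Nagura is cleaner and closer in spirit to how the paper treats both statements (as black boxes): since $\frac{6}{5}n<2n-2$ for all $n\geq 3$, Nagura's theorem settles every $n\geq 25$ at once, and the cases $4\leq n\leq 24$ are a short finite verification; the only aesthetic objection is that one then derives the elementary postulate from a deeper result, which is harmless logically but inverts the usual order of proof.
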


\begin{proposition}[Nagura's theorem, \cite{nagura:1952}] For each $n > 24$, there is always a prime $p$ such that $n < p < \frac{6}{5} n$.
\end{proposition}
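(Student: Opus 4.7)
The plan is to prove Nagura's theorem by the classical Chebyshev-type method, working with the first Chebyshev function $\theta(x)=\sum_{p\leq x}\log p$. The inequality to establish is $\theta\bigl(\tfrac{6}{5}n\bigr)>\theta(n)$ for every $n>24$, since any strict increase of $\theta$ across the interval $(n,\tfrac{6}{5}n]$ forces a prime to lie in that interval. To convert this into bounds we can verify, I would derive an upper bound $\theta(x)<A\,x$ and a matching lower bound $\theta(x)\geq B\,x-E\sqrt{x}\log x$ for explicit constants $A,B,E$ with $\tfrac{6}{5}B>A$.

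The source of both bounds is the prime factorization of binomial and multinomial coefficients. First I would recall Legendre's formula $v_p(n!)=\sum_{k\geq 1}\lfloor n/p^k\rfloor$, and apply it to $\binom{2n}{n}$, whose size is sandwiched by $\tfrac{4^n}{2n+1}\leq\binom{2n}{n}\leq 4^n$. Comparing $\log\binom{2n}{n}$ with $\sum_{p\leq 2n} v_p\!\bigl(\binom{2n}{n}\bigr)\log p$ and using the standard observation that $v_p\!\bigl(\binom{2n}{n}\bigr)\log p\leq \log(2n)$ yields an upper bound on $\theta(2n)-\theta(n)$, and upon summing over dyadic intervals an upper bound on $\theta(x)$ itself. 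For the lower bound I would use the same coefficient together with the Erd\H{o}s-style observation that primes $p\in(n,2n]$ appear exactly to the first power in $\binom{2n}{n}$. To reach the sharper ratio $6/5$ rather than $2$, I would then iterate the trick with a finer coefficient such as the multinomial $\tfrac{(6m)!}{m!^{\,6}}$, which has size comparable to $6^{6m}$ and whose prime factorization isolates primes in intervals of the form $(m,\tfrac{6m}{5})$, $(\tfrac{6m}{5},\tfrac{6m}{4})$, etc.; this is the device by which Nagura refines Bertrand from the constant $2$ to $6/5$.

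Once the two bounds are combined, assuming no prime lies in $(n,\tfrac{6}{5}n]$ gives $\theta(\tfrac{6}{5}n)=\theta(n)$, which pits $B\cdot\tfrac{6}{5}n - E\sqrt{n}\log n\leq A\cdot n$ against itself, an inequality that fails once $n$ exceeds some explicit threshold $N_0$. It remains to close the gap between $24$ and $N_0$ by direct verification: for each integer $n$ in that finite range exhibit a prime in $(n,\tfrac{6}{5}n]$ from a precomputed table of primes.

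The main obstacle is the careful bookkeeping of the explicit constants. Chebyshev's bare method with $\binom{2n}{n}$ gives ratios close to $\log 4\approx 1.386$ versus $\tfrac{1}{2}\log 4\approx 0.693$, which only suffices for a factor of $2$ (Bertrand). Squeezing the argument down to the factor $6/5$ requires the refined multinomial inputs together with a moderately delicate analysis of the $O(\sqrt{x})$ error term, and the resulting $N_0$ may be sizeable, making the finite verification step the most tedious but also the most essential part of the proof.
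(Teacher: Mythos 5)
The paper never proves this proposition: Nagura's theorem is imported verbatim from \cite{nagura:1952} and used as a black box (only for $n=a+3>24$ in Case A of Theorem \ref{T1}), so your proposal can only be measured against the classical proof you are reconstructing. Your overall plan --- show $\theta\bigl(\tfrac65 n\bigr)>\theta(n)$ by pairing an explicit upper bound with an explicit lower bound on $\theta$, then close the remaining finite range against a prime table --- is indeed the shape of Nagura's argument. But the specific device you propose for sharpening Bertrand's factor $2$ to $6/5$ would fail. The multinomial $(6m)!/(m!)^6$ does not isolate primes in $(m,\tfrac65 m)$: the exponent of a prime $p$ in it is $\sum_{k\ge1}\bigl(\lfloor 6m/p^k\rfloor-6\lfloor m/p^k\rfloor\bigr)$, which is positive for essentially all $p\le 6m$ (it equals $5$ for $p\in(m,\tfrac65 m]$, $4$ for $p\in(\tfrac65 m,\tfrac32 m]$, and so on), and its total logarithmic size is only about $6m\log 6$, so the Chebyshev constants it yields are nowhere near a ratio of $6/5$. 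The classical refinement uses an alternating quotient such as $\frac{(30n)!\,n!}{(15n)!\,(10n)!\,(6n)!}$, whose exponent function $\lfloor 30x\rfloor+\lfloor x\rfloor-\lfloor 15x\rfloor-\lfloor 10x\rfloor-\lfloor 6x\rfloor$ takes only the values $0$ and $1$; this gives lower and upper $\theta$-bounds with constants $A\approx 0.921$ and $\tfrac65 A$, i.e.\ in the ratio \emph{exactly} $6/5$, which is still not strictly enough to trap a prime in an interval of ratio $6/5$. Nagura's actual contribution is the further refinement of these estimates (careful gamma-function/Stirling bounds for a related quotient) that pushes the ratio strictly below $6/5$ with explicit error terms and an explicit threshold.

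Second, for a statement whose entire content is explicit --- a prime in $\bigl(n,\tfrac65 n\bigr)$ for \emph{every} $n>24$ --- the constants $A,B,E$, the resulting threshold $N_0$, and the verification below $N_0$ are not bookkeeping to be deferred; they are the theorem. You do flag the finite check, but without a correct choice of factorial quotient and actual numerical constants your argument establishes at best an asymptotic version (a prime in $\bigl(n,\tfrac65 n\bigr)$ for all sufficiently large $n$), which would not by itself support the use of the proposition for $a>21$ in Theorem \ref{T1}. So either redo the estimate with an alternating quotient and carry the explicit computation through to $N_0$, or simply cite \cite{nagura:1952} as the paper does.
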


\begin{theorem}\label{T1} For all $a,b \geq 3$ except for $a=4, b=3$ and $a=3, b=4$, the $\mathsf{LPModSum}$ protocol is secure for size $(a,b,1)$.
\end{theorem}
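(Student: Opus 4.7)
The plan is to apply the Security Theorem~\ref{corPrime} at an \emph{augmented} deal size $(a,b,c)$ in which $c$ is chosen so that $a+b+c$ is prime, and then pull the result back to $(a,b,1)$ via the Reduction Proposition~\ref{reduction}. Concretely, let $p$ be the least prime $\ge a+b+1$ and set $c := p - a - b$. Then $c \ge 1$ (since $p \ge a+b+1$) and $a+b+c = p$ is prime, so Theorem~\ref{corPrime} is applicable to $(a,b,c)$; since $1 \le c$, security of the announcement of $\sum_p A$ for $(a,b,c)$ transfers down to $(a,b,1)$, and this announcement is exactly the $\mathsf{LPModSum}$ announcement.

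Substituting $c = p - a - b$, the two inequalities of Theorem~\ref{corPrime} simplify to
\[
a(b-1)\;\ge\;p-1 \qquad\text{and}\qquad b(a-1)\;\ge\;p-1.
\]
Taking $a \le b$ without loss of generality, the second is the more restrictive, so the whole theorem reduces to proving $b(a-1) \ge p - 1$ for every pair with $3 \le a \le b$ apart from $(a,b)=(3,4)$. I would then split on $a$ and estimate $p$ by the best available prime-gap result in each range.

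For $a \ge 5$, Bertrand's postulate gives $p \le 2(a+b) - 1$, reducing the target inequality to $b(a-3) \ge 2(a-1)$; using $b \ge a$ this is implied by $a^2 - 5a + 2 \ge 0$, which holds for $a \ge 5$. For $a = 4$ with $b \ge 6$, Bertrand gives $p \le 2b + 7$ and the inequality follows from $3b \ge 2b + 6$; the remaining small pairs $(4,4)$ and $(4,5)$ are handled by direct inspection (in both, $p = 11$ and $3b \ge 10$). For $a = 3$, Bertrand is insufficient because $p \le 2b+5$ falls short of the required $p \le 2b + 1$, so I would switch to Nagura's theorem, which applies once $a+b+1 > 24$, i.e.\ $b \ge 21$: it yields $p < \tfrac{6}{5}(b+4)$, whence $2b \ge p - 1$ reduces to $4b \ge 19$, automatic. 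The finite range $3 \le b \le 20$ is then dispatched by direct tabulation of primes.

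The main obstacle is precisely this $a = 3$ regime. Bertrand's bound is a constant too weak, forcing the appeal to Nagura together with a finite case check below the Nagura threshold. The single exception $(3,4,1)$ emerges from that check: the prime gap between $7$ and $11$ pushes $p$ up to $11$, while $2b = 8 < 10 = p - 1$. That the theorem admits exactly this one exception is thus a structural feature of the distribution of primes near $a+b+1$, not an artifact of the method.
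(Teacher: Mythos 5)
Your proposal is correct and takes essentially the same route as the paper: augment the deal to $(a,b,c)$ with $c=p-(a+b)$ for $p$ the least prime $\geq a+b+1$, verify the inequalities of Theorem~\ref{corPrime} via Bertrand's postulate for the larger parameters and via Nagura's theorem plus a finite tabulation when the smaller parameter is $3$, and pull the result back to $(a,b,1)$ by Proposition~\ref{reduction}. The differences are cosmetic only: you normalize to $a\leq b$ rather than $a\geq b$, and you fold the case where $a+b+1$ is already prime into the same computation instead of citing Corollary~\ref{corC1} separately.
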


\begin{proof} We may assume that $a \geq b$, as security for $(a,b,c)$ entails security for $(b,a,c)$.

\medskip

\noindent \underline{Case 1}: $b > 3$.
\smallskip

\noindent \underline{Case 1.1}: $a+b+1$ is prime.
\smallskip

\noindent Then, the $\mathsf{DModSum}$ and the $\mathsf{LPModSum}$ protocols coincide and so the security of the protocol follows from Corollary \ref{corC1}.

\smallskip
\noindent \underline{Case 1.2}: $a+b+1$ is not prime.

\smallskip
\noindent Let $p$ be the least prime number greater than $a+b+1$ and put $c= p- (a+b)$. By Bertrand's postulate for $n=a+b$ we get that there is a prime $q$ such that $a+b < q < 2(a+b)-2$ and hence $p \leq 2(a+b)-3$. Then, $a+b+c$ is prime and it is easy to check that $(a,b,c)$ satisfies the conditions of the Security Theorem \ref{corPrime} (observe that we only need to verify the first condition since we are assuming $a \geq b$).

Namely,
$$\begin{array}{lcl}
ab-2a-b-c+1 & = & ab -2a -b-(p-(a+b)) + 1 \\
  & = & ab -a -p + 1 \\
  & \geq & ab-a-(2(a+b)-3) + 1\\
  & = & ab-3a-2b+4.
  \end{array}$$


\begin{enumerate}

 \item[] \underline{Case A}: $b \geq 5$. \newline Then, we have (recall that $a \geq b$):
 $$ab-3a-2b+4 \geq ab-3a-2a+4 = ab-5b + 4 \geq 5a-5a+ 4 = 4 \geq 0.$$

\item[] \underline{Case B}: $b =4 $. \newline Then, we have (recall that $a \geq b = 4$):

$$ab-3a-2b+4 = 4a-3a-8+4 = a-4 \geq 0.$$

\end{enumerate}

So, it follows from the Security Theorem \ref{corPrime} that the modulo-sum protocol is secure for size $(a,b,p-(a+b))$ and then the least-prime modulo-sum protocol for size $(a,b,1)$ is also secure by the Reduction Proposition \ref{reduction}.
\medskip

\noindent \underline{Case 2}: $b = 3$.

\smallskip
\noindent \underline{Case 2.1}: $a+4$ is prime.

\smallskip

\noindent  Then, the $\mathsf{DModSum}$ and the $\mathsf{LPModSum}$ protocols coincide and so the security of the protocol follows from Corollary \ref{corC1}.

\smallskip

\noindent \underline{Case 2.2}: $a+4$ is not prime.

\smallskip

\noindent Let $p$ be the least prime number greater than $a+4$ and put $c = p-(a+3)$. Reasoning as in case 1.2, it follows from Bertrand's postulate that $p \leq 2a+3$. Nevertheless, now this upper bound for $p$ does not allow us to verify that the conditions of the Security Theorem \ref{corPrime} hold for size $(a,3,c)$. Namely,

$$\begin{array}{lcl}
ab-2a-b-c+1 = 3a - 2a -3 -(p-(a+3)) + 1  & = &  2a-p+1  \\ & \geq
&2a -(2a+3) + 1   =  -2. \end{array}$$

\noindent However, observe that in order to obtain a value greater than or equal to 0 in the calculation above it would suffice to improve our upper bound on the prime $p$ to $p \leq 2a+1$. We will be able to do this by using Nagura's theorem for $a > 21$ and by inspection for $4< a \leq 20$.

\begin{itemize}

\item[] \underline{Case A}: $a > 21$. \newline
It follows from Nagura's theorem for $n = a+3$ that $p \leq \frac{6}{5} (a+3)$. But then:
$$p \leq \frac{6}{5} (a+3) = \frac{6a+18}{5} \leq \frac{6a+a}{5} = \frac{7}{5}a \leq 2a + 1.$$
\item[] \underline{Case B}: $4< a \leq 20$. \newline

We can readily compute the actual values of the primes $p$ for all $4 < a \leq 20$ with $a+4$ not prime. Namely, \\

\begin{center}

\begin{tabular}{|c|c|c|c|c|c|c|c|c|c|c|c|} \hline
 $a$  &5 &6 &8 &10 &11 &12 &14 &16 &17 & 18 & 20 \\ \hline
 $p$  &11 &11 &13 &17 &17 &17 &19 &23 &23 &23  &29 \\ \hline
 $2a+1$  &11 &13 &17 &21 &23 &25 &29 &33 &35 &37  & 41 \\ \hline
\end{tabular}
\end{center}

\noindent \\ Note that $p \leq 2a+1$ holds for all such $a$'s.
\end{itemize}

\medskip
\noindent This completes the proof of the theorem.
\end{proof}

\section{Special cases and final results} \label{341}

We have not dealt with the special cases $(4,3,1)$ and $(3,4,1)$ yet. Let us analyze the case $(4,3,1)$. For these size parameters, the $\mathsf{LPModSum}$ protocol amounts to the announcement of $\sum_{11} A$, whereas the $\mathsf{DModSum}$ protocol amounts to the announcement of $\sum_8 A$. We have checked using Haskell (see the Appendix) that:

\begin{itemize}
\item the $\mathsf{LPModSum}$ protocol is not secure; for example, if Alice holds $0123$, Bob holds $456$ and Cath holds 7, then after the announcement of $\sum_{11} A = 6$, Cath learns that Alice holds card 2; but

\item the $\mathsf{DModSum}$ protocol is secure.

\end{itemize}

\noindent By symmetry, similar remarks apply for the $(3,4,1)$ case. We recall that the $(3,3,1)$ case is an instance of the $\mathsf{DModSum}$ protocol, namely $7$-$\mathsf{ModSum}$. We can now combine our previous results to obtain the following:

\begin{theorem} For $a,b \geq 3$, there is a secure and informative protocol for deals of size $(a,b,1)$.
\end{theorem}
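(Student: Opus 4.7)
The plan is to assemble the final theorem directly from the partial results already established, by a simple case split on the pair $(a,b)$. Since informativity of any $n$-$\mathsf{ModSum}$ protocol with $c=1$ is already settled by Proposition \ref{information}, the entire burden in each case lies on choosing a modulus $n$ for which security can be guaranteed by the earlier sections. Thus the proof reduces to exhibiting, for every admissible $(a,b)$, an $n$-$\mathsf{ModSum}$ protocol (for some $n \geq d$) that is known to be secure for size $(a,b,1)$.

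First I would treat the generic case $a,b \geq 3$ with $(a,b) \notin \{(3,4),(4,3)\}$. Here I simply invoke Theorem \ref{T1}: the $\mathsf{LPModSum}$ protocol is secure for deals of size $(a,b,1)$. Since $\mathsf{LPModSum}$ is an instance of $n$-$\mathsf{ModSum}$ (with $n$ the least prime at least $d = a+b+1$) and $c=1$, Proposition \ref{information} gives informativity at the same time. So both properties hold simultaneously for this subcase.

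Next I would handle the two exceptional sizes $(4,3,1)$ and $(3,4,1)$, where $\mathsf{LPModSum}$ itself in fact fails. For these I would instead adopt the $\mathsf{DModSum}$ protocol, i.e., the announcement of $\sum_8 A$ (respectively $\sum_8 B$). Security is exactly the computational fact recorded in Section \ref{341} (checked by the Haskell code in the Appendix), and informativity again follows from Proposition \ref{information}, applied with $n = d = 8$ and $c = 1$. Combining the two subcases yields the theorem.

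I do not anticipate any genuine obstacle here: the work is a two-way case distinction chaining Theorem \ref{T1}, the verified security of $\mathsf{DModSum}$ on $(4,3,1)$ and $(3,4,1)$, and Proposition \ref{information}. The one small subtlety worth flagging explicitly is the boundary case $(3,3,1)$, which is \emph{not} among the exceptions because $d = 7$ is prime, so $\mathsf{LPModSum}$ and $\mathsf{DModSum}$ coincide and the size is already covered by Corollary \ref{corC1}; this is consistent with the remark preceding the theorem in Section \ref{341}.
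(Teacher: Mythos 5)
Your proposal is correct and matches the paper's own (implicit) argument: combine Theorem \ref{T1} for all $(a,b,1)$ with $a,b\geq 3$ except $(3,4,1)$ and $(4,3,1)$, use the computationally verified security of $\mathsf{DModSum}$ for those two exceptional sizes, and obtain informativity in every case from Proposition \ref{information} since $c=1$. Your remark on $(3,3,1)$ being covered because $d=7$ is prime is likewise consistent with the paper.
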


\section{Conclusions and future work}

Given three card players Alice, Bob, and Cath who hold $a,b,c$ cards, respectively, and who only know their own cards, we have investigated general conditions for when Alice or Bob can safely announce the sum of the cards they hold modulo an appropriately chosen integer. We demonstrate that this holds whenever $a,b>2$ and $c=1$. Because Cath holds a single card, this also implies that Alice and Bob will learn the card deal from the other player's announcement. We want to continue our investigations for cases where $c > 1$, and compare this to incidental results from the literature. The authors believe that their work could lead to new methods of secure information exchange.

\bibliographystyle{plain}

\bibliography{biblio2011a}

\section*{Appendix: the Haskell script {\tt subsets.hs}}

\begin{figure}[h]
{\footnotesize
\begin{verbatim}
import Data.List
-- (subsets n xs) outputs the list of all the subsets of xs of n elements. 
subsets :: Int -> [Int] -> [[Int]]
subsets 0 _ = [[]]
subsets _ [] = []
subsets (n+1) (x:xs) = [x:ys | ys <- subsets n xs] ++ subsets (n+1) xs
-- (subsetSum m n xs) outputs the list of all sums (modulo m) of the subsets of xs of n elts.
subsetSum :: Int -> Int-> [Int] -> [Int]
subsetSum m n xs =  nub [mod (sum ys) m | ys <- subsets n xs]
-- (deals a b c) generates all the deals in an (a,b,c) card distribution. 
deals a b c = [[xs,ys,zs] | xs <- subsets a [0..(a+b+c-1)],
                            ys <- subsets b ([0..(a+b+c-1)] \\ xs),
                            zs <- [([0..(a+b+c-1)] \\ xs) \\ ys]]
-- (check m n as bs) checks whether each card of as can be interchanged in an n-tuple with an 
-- n-tuple of elements of bs with the same sum (modulo m).
check m n as bs = and [ or [elem (mod (x+y) m) ys | y <- subsetSum m (n-1) (as \\ [x])]| x <- as]
                   where ys = subsetSum m n bs
-- (secure a b c n) checks whether for each deal of an (a,b,c) card distribution, [as,bs,cs],  
-- each card of as can be interchanged in an n-tuple with an n-tuple of elements of bs with 
-- the same sum (modulo a+b+c).   
secure a b c n = and [check (a+b+c) n as bs | [as,bs,_] <- deals a b c]		   
-- (secure2 a b c) checks whether for each deal of an (a,b,c) card distribution, [as,bs,cs], 
-- there exists some n <= min(a,b) such that each card of as can be interchanged in an n-tuple
-- with an n-tuple of elements of bs with the same sum (modulo a+b+c)"
secure2 a b c = and [or [check (a+b+c) n as bs | n <- [2..min a b] ]| [as,bs,_] <- deals a b c]   
\end{verbatim}
}
\caption{The Haskell script {\tt subsets.hs}}
\label{fig.haskell}
\end{figure}

For input card deal of size $(3,4,1)$ the Haskell script {\tt subsets.hs} in Figure \ref{fig.haskell} shows that the $d$-$\mathsf{ModSum}$ protocol is secure. The implemented algorithm is the natural brute force one. In the general setting of an $(a,b,c)$ card distribution, we firstly define a function
\medskip

\verb"deals :: Int -> Int -> Int ->  [[Int]]"

\medskip
\noindent so that \verb"deals a b c" generates all the ${{a+b+c} \choose{a}} \cdot {{b+c} \choose {b}}$ possible card deals in an $(a,b,c)$ card distribution. Next, we define an auxiliary predicate
\medskip

\verb"check :: Int -> Int -> [Int] -> [Int] ->  Bool"

\medskip
\noindent so that \verb"check d k as bs" checks whether each card of Alice's hand $as$ can be interchanged in an $k$--tuple with a $k$--tuple of elements of Bob's hand $bs$ with the same sum (modulo $d$). Finally, combining \verb"deals" and \verb"check" we define the main \textit{generate and test} predicate
\medskip

\verb"secure :: Int -> Int -> Int -> Int ->  Bool"

\medskip
\noindent so that \verb"secure a b c k" checks whether for each card deal of an $(a,b,c)$ card distribution each card of Alice's hand can be interchanged in a $k$--tuple with a $k$--tuple of elements of Bob's hand with the same sum (modulo $a+b+c$). For {\em card safety} we also have to check \verb"secure b a c k", namely that each card of Bob's hand can be interchanged in a $k$--tuple with a $k$--tuple of elements of Alice's hand with the same sum. 

\end{document}